\newcommand{\str}{\ensuremath{S} }
\newcommand{\tree}{\ensuremath{T}}
\newcommand{\slp}{\ensuremath{\mathcal{S}} }
\newcommand{\fca}{\ensuremath{\textsc{firstcolor}}}
\newcommand{\lca}{\ensuremath{\textsc{lastcolor}}}
\newcommand{\labsuc}{\ensuremath{\textsc{ls}}}
\newcommand{\la}{\ensuremath{\textsc{la}}}
\newcommand{\leftc}{\ensuremath{\text{\textit{left}}}}
\newcommand{\rightc}{\ensuremath{\text{\textit{right}}}}
\newcommand{\band}{\ensuremath{\wedge}}
\newcommand{\bor}{\ensuremath{\vee}}
\newcommand{\bxor}{\ensuremath{\oplus}}
\newcommand{\hf}{\ensuremath{\mathcal{H}}}
\newcommand{\pat}{\ensuremath{P}}
\newtheorem{lemma}{Lemma}
\newtheorem{theorem}{Theorem}
\newcommand{\qed}{\hfill\ensuremath{\Box}\medskip\\\noindent}
\newenvironment{proof}{\noindent\emph{Proof. }}
\title{Compressed Subsequence Matching and Packed~Tree~Coloring}
\author{Philip Bille\thanks{Supported in part by the The Danish Council for Independent Research $\vert$ Natural Sciences grant DFF ­1323--00178.} \and Patrick Hagge Cording\\[0.4em]
\texttt{\small{\{phbi,phaco,inge\}@dtu.dk}} \and Inge Li G\o rtz\footnotemark[1]}
\begin{document}
	
\maketitle

\begin{abstract}
\noindent We present a new algorithm for subsequence matching in grammar compressed strings. Given a grammar of size $n$ compressing a string of size $N$ and a pattern string of size $m$ over an alphabet of size $\sigma$, our algorithm uses $O(n+\frac{n\sigma}{w})$ space and $O(n+\frac{n\sigma}{w}+m\log N\log w\cdot occ)$ or $O(n+\frac{n\sigma}{w}\log w+m\log N\cdot occ)$ time. Here $w$ is the word size and $occ$ is the number of occurrences of the pattern. Our algorithm uses less space than previous algorithms and is also faster for $occ=o(\frac{n}{\log N})$ occurrences. The algorithm uses a new data structure that allows us to efficiently find the next occurrence of a given character after a given position in a compressed string. This data structure in turn is based on a new data structure for the tree color problem, where the node colors are packed in bit strings.
\end{abstract}

\section{Introduction}
% \begin{itemize}
% 	\item Given an SLP \slp of size $n$ compressing a string \str of size $N$ and a pattern string $P$ of size $m$ over an alphabet of size $\sigma$, the \textit{compressed subsequence matching problem} is to find and report the index of all minimal substrings of $\str$ that contain $P$ as a subsequence. A substring is said to be minimal if shortening it implies that $P$ is no longer a subsequence of that substring.
% 	\item In this paper we present a new algorithm for compressed subsequence matching which uses less space than previously known algorithms and is faster than the  previously fastest algorithm for a bounded number of occurrences.
% 	\item Our algorithm breaks with the table filling approach common to previous algorithms 
% \end{itemize}

In the  \textit{compressed subsequence matching problem} we are given a grammar \slp of size $n$ compressing a string \str of size $N$ and a pattern string $P$ of size $m$ over an alphabet of size $\sigma$, and the goal is to find and report the index of all minimal substrings of $\str$ that contain $P$ as a subsequence. A substring is said to be minimal if shortening it implies that $P$ is no longer a subsequence of that substring. In this paper we present a new algorithm for compressed subsequence matching which is space efficient and is faster than the previously fastest algorithm for a bounded number of occurrences. Our algorithm relies on a method that is different from the ones used by previous algorithms.

% \begin{itemize}
% 		\item Log data is sequential in time and can be seen as a string
% 		\item A common task is to search the log data for a sequence of events, say $A$ followed by $B$ followed by $C$, where the events may be separated by other events
% 		\item Many applications will automatically compress log data to save space
% 		\item The bottleneck when searching the compressed data is to decompress it
% 		\item Subsequence matching was also considered in relation to knowledge discovery and data mining \cite{mannila1997discovery}
% 		\item Processing compressed strings without full decompression is a natural goal 
% 		\item Subsequence matching may be seen as a restricted variant of approximate string matching where only deletions in the subject string (analogously, insertions in the pattern) is allowed
% \end{itemize}

Subsequence matching is useful when searching sequential log data for a sequence of events that may be separated by other events. Say for instance that we are running a webserver and we want to know how often a visitor has found her way to subpage $C$ through page $A$ and then $B$. We then set $\pat = ABC$ and apply a subsequence matching algorithm to the contents of the log file. Many applications will automatically compress log data to save space, and so the bottleneck of the procedure becomes decompression of the data. In this case, processing the data without fully decompressing it, is crucial. Subsequence matching was also considered in relation to knowledge discovery and data mining \cite{mannila1997discovery}.

	% \begin{itemize}
	% 	\item For uncompressed strings several algorithms have been presented \cite{mannila1997discovery,das1997episode,boasson1999window,cegielski2006multiple,baeza1991searching,tronicek2001episode,crochemore2003directed}
	% 	\item We may apply the online algorithm due to Das et al. to get an algorithm that uses $O(\frac{Nm}{\log m})$ time and $O(n)$ space
	% 	\item Cegielski et al., 2006 \cite{cegielski2006window}: Invents the table algorithm, and the tables are computed in $O(nm^2\log m)$ time and $O(nm^2)$ space
	% 	\item Tiskin, 2009 \cite{tiskin2009faster}: $O(nm^{1.5})$
	% 	\item Tiskin, 2011: $O(nm\log m)$ time and space(?) .. see http://arxiv.org/pdf/0707.3619.pdf
	% 	\item Yamamoto et al., 2011 \cite{yamamoto2011faster}: Improves Cegielski's algorithm to run in $O(nm)$ time and space
	% \end{itemize}

Several algorithms have been presented for uncompressed strings \cite{mannila1997discovery,das1997episode,boasson1999window,cegielski2006multiple,baeza1991searching,tronicek2001episode,crochemore2003directed}. The fastest of these is due to Das et al.~\cite{das1997episode}. Since it is an online algorithm we may apply it to the compressed version without having to store the entire decompressed string, and we get an algorithm with running time $O(\frac{Nm}{\log m})$ that uses $O(n+m)$ space. The first algorithm with time complexity independent of the size of the string was presented by Cegielski~et~al.~\cite{cegielski2006window} in 2006. Its runnning time is $O(nm^2\log m+occ)$ time and it uses $O(nm^2)$ space. Using a different approach, Tiskin improved the running time to $O(nm^{1.5}+occ)$ \cite{tiskin2009faster} and later even further to $O(nm\log m+occ)$~\cite{tiskin2011towards}. The space usage of his algorithms is $O(nm)$. The most recent improvement is due to Yamamoto~et~al.~\cite{yamamoto2011faster} who present an algorithm based on the ideas of Cegielski~et~al. that runs in $O(nm+occ)$ time and $O(nm)$ space. All results are summarized in Table 1.

\begin{table}[h]\label{tab:results}
\begin{center}
\begin{tabular}{lll}
	\hline
	Time complexity & Space complexity & Author(s) \\
	\hline
	$O(\frac{Nm}{\log m})$ & $O(n+m)$ & Das et al.~\cite{das1997episode} \\
	$O(nm^2\log m+occ)$ & $O(nm^2)$ & Cegielski~et~al.~\cite{cegielski2006window} \\
	$O(nm^{1.5}+occ)$ & $O(nm)$ & Tiskin~\cite{tiskin2009faster} \\
	$O(nm\log m+occ)$ & $O(nm)$ & Tiskin~\cite{tiskin2011towards} \\
	$O(nm+occ)$ & $O(nm)$ & Yamamoto~et~al.~\cite{yamamoto2011faster} \\
	\hline
	$O(n+\frac{n\sigma}{w}+m\log N\log w\cdot \text{occ})$ & \multirow{2}{*}{$O(n+\frac{n\sigma}{w})$} & \multirow{2}{*}{This paper} \\	
	$O(n+\frac{n\sigma}{w}\log w+m\log N\cdot \text{occ})$ & & \\
	\hline
\end{tabular}
\caption{Time and space complexities of algorithms for compressed subsequence matching.}
\end{center}
\end{table}

\noindent Assume without loss of generality that the compressed string is given as a Straight Line Program (SLP). An SLP is an acyclic grammar in Chomsky normal form, i.e., a grammar where each nonterminal production rule expands to two other rules and  generates one string only. SLPs are widely studied because they model many well-known compression schemes, such as LZ77 \cite{lz77}, LZ78 \cite{lz78}, and Re-Pair \cite{larsson2000off}  with little overhead \cite{charikar2005smallest,rytter2003application}. The following theorem is the main result of this work.

%\cite{bille2013fingerprints,charikar2005smallest,rytter2003application}

\begin{theorem}\label{thm:SCSM}
Given an SLP $\slp$ of size $n$ compressing a string $\str$ of size $N$ and a pattern $\pat$ of size $m$ over an alphabet of size $\sigma$, compressed subsequence matching can be solved in $O(n+\frac{n\sigma}{w})$ words of space and time
\begin{enumerate}
\item[\textit{(i)}] $O(n+\frac{n\sigma}{w}+m\log N\log w\cdot \text{occ})$, or
\item[\textit{(ii)}] $O(n+\frac{n\sigma}{w}\log w+m\log N\cdot \text{occ})$
\end{enumerate}
in the word RAM model with word size $w\geq \log N$, and where $\text{occ}$ is the number of minimal occurrences of $\pat$ in $\str$.
\end{theorem}

\noindent Our new algorithm uses less space (linear in $n$ if $\sigma\leq w$) and is also faster than the previously fastest algorithm for $o(\frac{n}{\log N})$ occurrences when $\sigma \leq m$. Note that we can guarantee that the latter requirement always holds by bounding $\sigma=O(m)$ using hashing in return for using $O(m)$ additional extra space.

The algorithm is based on the idea of a simple algorithm for subsequence matching in uncompressed strings which basically scans the string for occurrences of the pattern. We speed up the scanning on compressed strings by introducing the first data structure for SLPs that supports labelled successor queries. The answer to a labelled succesor query $\labsuc(i,c)$ on a string is the index of the first character $c$ occurring after position $i$ in the string. An essential part of this data structure is a new data structure for the tree color problem. This problem is to preprocess a tree where each node is colored by zero or more colors, such that given a node $v$ and a color $c$, we may efficiently answer a first colored ancestor query, i.e., compute the lowest ancestor of $v$ with color $c$. Additionally, this data structure also supports a new type of query we call the last colored ancestor. Here the query is two nodes $u$ and $v$ and a color $c$, and the answer is the highest node on the path from $u$ to $v$ with color $c$. These results may be of independent interest.

% \begin{itemize}
% 	\item The new algorithm uses less space (linear in $n$ if $\sigma\leq w$) and is also faster than the previously fastest algorithm for $o(\frac{n}{\log N})$ occurrences
% 	\item The algorithm is based on the idea of a simple algorithm for subsequence matching in uncompressed strings which basically scans the string for occurrences of the pattern
% 	\item We speed up the scanning on compressed strings by introducing the first data structure for SLPs that supports labelled successor queries
% 	\item The answer to a labelled succesor query $\labsuc(i,c)$ on a string is the index of the first character $c$ occurring after $i$ in the string
% 	\item In doing so, we present a new data structure for the tree color problem
% 	\item The tree color problem is to preprocess a tree where each node has zero or more colors such that given a node $v$ and a color $c$, we may efficiently compute the lowest ancestor of $v$ with color $c$
% \end{itemize}

% \begin{itemize}
% 	\item This paper is organized such that it describes the data structure needed for Theorem~\ref{thm:SCSM} in a bottom up approach
% 	\item That is, we start by describing a our new result for the tree color problem, then move on to the labelled successor data structure, and ultimately describe the algorithm for subsequence matching
% \end{itemize}

This paper is organized such that we start by describing our new result for the tree color problem (after a section of preliminaries), then move on to the labelled successor data structure, and finally describe the algorithm for subsequence matching.

\section{Preliminaries}

%\subsection{Bit Strings}

\paragraph{Bit Strings.}
We will use bit strings to represent sets. In a bit string $B=b_1b_2\ldots b_u$ representing a set $\mathcal{B}$ of elements from a universe of size $u$, $b_i=\mathtt{1}$ iff element $i$ is in $\mathcal{B}$. $B=[\mathtt{0}]^u$ denotes the empty set. The operators $\band$, $\bor$, and $\bxor$ denote the bitwise AND, OR, and exclusive OR (XOR) of two bit strings. The negation of a bit string $B$ is $\overline{B}$. A \emph{summary} $B_s$ of $k$ bit strings $B_1,B_2, \ldots,B_k$ of equal length is $B_s=B_1 \bor B_2 \bor \ldots \bor B_k$. For a bit string of length $w$ we assume that the mask of any constant can be computed in $O(1)$ time. Given a bit string $B=b_1b_2\ldots b_w$, $b_1$ is the most significant bit. The index of the least significant set bit can be found in $O(1)$ time from $\log_2 ( \overline{(B-1)\bxor B}\band B )$. Finding the most significant set bit is more elaborate, but can also be done $O(1)$ time~\cite{fredman1993surpassing}. An $n\times m$ bit matrix may be transposed in $O(w\log w)$ time if $n\leq w$ and $m\leq w$~\cite{thorup2002randomized}.

\paragraph{Trees.}
In this paper all trees are rooted, ordered, and have labels on the nodes. The number of nodes in a tree $\tree$ is $t$. We denote by $\tree(v)$ the subtree rooted in $v$ containing all descendants of $v$. The size $|\tree(v)|$ is the number of nodes in the subtree $\tree(v)$ including $v$. If $u$ is a node in the subtree $\tree(v)$ we write $u\in \tree(v)$. If $\tree$ is a binary tree we denote the left and right child of a node $v$ by $\leftc(v)$ and $\rightc(v)$. %The following is a selection of results and techniques for trees that we will use in this paper.

%\paragraph{Heavy path decomposition.} 
A heavy path decomposition \cite{sleator1983data} decomposes $\tree$ into disjoint paths. Nodes are classified as either heavy or light and the decomposition is defined as follows. The root is light. For each internal node $v$, its heavy child $w$ is the node for which $\tree(w)$ is of maximum size among the subtrees rooted in children of $v$. The other children of $v$ are light. Edges are also classified as heavy and light. An edge going into a heavy node is heavy and likewise for light nodes. The heavy path decomposition ensures the property that $\frac{1}{2}|\tree(v)|>|\tree(u)|$ for any light child $u$ of $v$. This means that there are $O(\log t)$ light edges on any path from the root to a leaf. The heavy path decomposition can be computed in $O(t)$ time and space.

%\paragraph{Cluster partition.} 
Given a binary tree $\tree$ rooted in a node $r$, $t>1$, and a parameter $ 1 \leq x \leq t$, we may partition $\tree$ into at most $t/x$ clusters such that for a fixed constant $c$, the size of any cluster is at most $cx$ \cite{alstrup1997optimal,alstrup1997minimizing} (see also \cite{abiteboul2006compact} for a full proof). Two clusters overlap in at most one node, and a node is called a boundary node if it is part of more than one cluster. Any cluster has at most two boundary nodes, and a boundary node is either a leaf or the root in the subtree that is the cluster. The tree obtained by repeatedly contracting edges between two nodes if one of them is not a boundary node is called the macro tree. In other words, the macro tree is the tree consisting only of boundary nodes. A cluster partition can be found in $O(t)$ time.

	%\paragraph{Level ancestor.} 
The answer to a level ancestor query $\la(v,d)$ on $\tree$ is the ancestor of $v$ with depth $d$. A linear space data structure that answers an $\la$ query in $O(1)$ time can be computed for $\tree$ in $O(t)$ time~\cite{dietz1991finding} (see also \cite{berkman1994finding,alstrup2000improved,bender2004level}).

%\subsection{Straight Line Programs}
\paragraph{Straight Line Programs.}
A Straight Line Program \slp is a context-free grammar in Chomsky normal form with $n$ production rules that unambigously derives a string $\str$ of length $N$. We represent the SLP as a rooted, ordered, and node-labelled directed acyclic graph (DAG) with outdegree $2$ and we will refer to production rules as nodes in the DAG. A depth-first left-to-right traversal starting from a node $v$ in the DAG produces the string $\str(v)$ of length $|\str(v)|$. The tree that emerges from the traversal we call the derivation tree. We denote the left and right children of $v$ for $\leftc(v)$ and $\rightc(v)$, respectively. Furthermore, the height of the SLP is the length of the longest path going from the root to a terminal node and is denoted by $h$.

% A non-terminal production rule has two variables on its right side and a terminal a single character from the alphabet $\Sigma_\slp$ on its right side.

%\paragraph{Random access in SLPs.} 
We may access a character $\str[i]$ in $O(h)$ time by storing $|S(v)|$ for each node $v$ in the SLP, and simulate a top-down search of the derivation tree. Doing so yields a unique path from the root of $\slp$ to the terminal node labelled $\str[i]$. There is also a linear space data structure that supports random access in SLPs in $O(\log N)$ time~\cite{bille2011random}. A key technique used in this data structure is the extension of the heavy path decomposition of trees to SLPs which we will also use in our data structure. For each node $v \in \slp$, we select the child of $v$ that derives the longest string to be a heavy node. The other child is light. Heavy and light edges are defined as in the decomposition of trees. Whereas applying this technique to a tree results in a decomposition into disjoint paths, it will result in a decomposition into disjoint trees when applied to an SLP. We denote this set of trees by the heavy forest $\hf$ of the SLP. This decomposition ensures that the number of light edges on any path from the root to a terminal node is $O(\log N)$. Hence, on any path from the root of the SLP to a terminal node, we visit at most $\log N$ trees from $\hf$. When accessing a character using the data structure of \cite{bille2011random} we may also report the entry and exit nodes for each tree visited on the unique root-to-terminal path that emerges from the query.

\section{Packed Tree Color Problems}
In a colored tree, each node is colored by zero or more colors from the set $\{1,\ldots,\sigma\}$. A packed colored tree is a colored tree where the colors of each node $v$ is given as a bit string $C(v)$ where $C(v)[c]=1$ iff $v$ is colored $c$. In this section we consider the \textit{packed tree color problem} which is to preprocess a packed colored tree $\tree$ to support  first and last colored ancestor queries. The answer to a first colored ancestor query $\fca(v,c)$ is the lowest ancestor of $v$ with color $c$, and the answer to a last colored ancestor query $\lca(u,v,c)$ is the highest node with color $c$ on the path from $u$ to $v$, where we always assume that $u$ is an ancestor of $v$. Throughout this section we will use the following notation to distinguish results. If a data structure requires $p(t)$ time to build, uses $s(t)$ space, and supports $\fca$ and $\lca$ queries in $q(t)$ time, then the the triple $\langle p(t), s(t), q(t)\rangle$ refers to the solution.

Solutions to the tree color problem for trees that are not packed may be applied to packed trees. All known solutions focus entirely on supporting $\fca$ queries \cite{dietz1991finding,muthukrishnan1996time,ferragina1996efficient,alstrup1998marked}. A simple solution that supports $\fca$ queries in $O(1)$ time is to store the answer for every color in every node. This yields a $\langle O(t\sigma),O(t\sigma),O(1)\rangle$ solution. The currently best known trade-off for the tree color problem is $\langle O(t+D),O(t+D),O(\log w)\rangle$  \cite{muthukrishnan1996time}, where $D=\sum_{v\in T}\sum_{i=1}^\sigma C(v)[i]$ is the accumulated number of colors used. %As a building block for the random access data structure developed by Bille et al.~\cite{bille2011random}, a solution to the tree color problem is also given. It has  $O(1)$ query time and $o(n\sigma)$ preprocessing time and space, but it comes with the restriction that the nodes can have only one color. 

Our motivation for revisiting this problem is twofold. First we have that $D=O(t\sigma)$ in our application and we are striving for a space bound that is in $o(t\sigma)$. Second we want to support $\lca$ queries. %We first present three solutions and then combine them to a fourth with a new and desireable time-space trade-off.

In this section we present three solutions to the packed tree coloring problem and combine them to a data structure with a new and desireable time-space trade-off.

%Due to lack of space, the analyses of the first two data structures are omitted.

\subsection{A $\langle O(t\sigma), O(t\sigma), O(1) \rangle$ Solution}
%For this data structure we show how a $\lca$ query can be reduced to two $\fca$ queries and one $\la$ query if we allow the data structure to use $O(t\sigma)$ space. The main idea comes from \cite{muthukrishnan1996time}.

%\paragraph{Data structure.} 
We store the result of a $\fca(v,c)$ query for every node and color. For each color, let the induced $c$-colored subtree be the tree obtained by deleting all nodes that are not colored by color $c$ except the root. Build a levelled ancestor data structure for each induced colored subtree.

%\paragraph{Query.} 
The result of a $\fca$ query is precomputed. A $\lca(u,v,c)$ query is answered as follows. If $\fca(v,c)=\fca(u,c)$ then there is not a node with color $c$ on the path from $u$ to $v$. If $\fca(v,c)\neq \fca(u,c)$ then let $v'$ and $u'$ be the nodes corresponding to $\fca(v,c)$ and $\fca(u,c)$ in the induced $c$-colored subtree. The answer to $\lca(u,v,c)$ is then the answer to $\la(v',depth(u')-1)$ in the induced $c$-colored subtree.

%\paragraph{Analysis.} 
The results of $\fca$ queries can be found and stored using $O(t\sigma)$ time and space. The induced colored subtrees can be computed in $O(t\sigma)$ time and use $O(D)=O(t\sigma)$ space. A $\fca$ query clearly takes $O(1)$ time. For a $\lca$ query, we perform two $\fca$ queries and one $\la$ query, each of which takes constant time.

\begin{lemma}
The packed tree color problem can be solved using $O(t\sigma)$ preprocessing time and space, and $O(1)$ query time.
\end{lemma}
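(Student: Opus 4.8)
The plan is to verify the three resource bounds claimed in the lemma---preprocessing time and space $O(t\sigma)$, and query time $O(1)$---by checking each of the two query types against the data structure described just above the statement. The construction has two ingredients: a precomputed table storing the answer to $\fca(v,c)$ for every node $v$ and every color $c$, and, for each color $c$, a level ancestor structure built on the induced $c$-colored subtree. I would organize the proof as: (1) bound the resources for building and storing these two ingredients, (2) argue correctness and constant query time for \fca, and (3) argue correctness and constant query time for \lca.

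For the resource bounds, the first ingredient is a $t \times \sigma$ table, which trivially occupies $O(t\sigma)$ space; filling it is a single bottom-up (or top-down) pass in which each node inherits its parent's answer vector and overwrites entries for the colors it carries. Since propagating and updating one node's entire color vector costs $O(\sigma)$, the whole table is computed in $O(t\sigma)$ time. For the second ingredient, each induced $c$-colored subtree can be extracted in time proportional to its size, and by the preliminaries a level ancestor structure on a tree of $t_c$ nodes is built in $O(t_c)$ time and space answering queries in $O(1)$. Summing over all colors, $\sum_c t_c = O(D) = O(t\sigma)$, so the total cost of all these structures is $O(t\sigma)$ time and space. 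Thus both construction time and total space are $O(t\sigma)$.

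Correctness of the \fca query is immediate, since its answer is read directly from the precomputed table in $O(1)$ time. For \lca, I would justify the case analysis given in the construction. The key observation is that the nodes with color $c$ lying on the root-to-$v$ path correspond exactly to ancestors of $v'$ (the image of $\fca(v,c)$) in the induced $c$-colored subtree, and similarly for $u$; hence if $\fca(v,c)=\fca(u,c)$ the path strictly between $u$ and $v$ contains no $c$-colored node and the query returns ``none,'' while otherwise the highest $c$-colored node strictly below $u$ on the way to $v$ is precisely the child of $u'$ on the $u'$-to-$v'$ path in the induced subtree. That child is retrieved by a single level ancestor query $\la(v', \mathit{depth}(u')-1)$ in the induced subtree. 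The whole query therefore consists of two table lookups and one $\la$ query, each $O(1)$, giving $O(1)$ total.

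The only genuinely delicate point---and the step I would treat most carefully---is establishing that depths in the induced $c$-colored subtree faithfully capture the ancestor ordering along the original path, so that the off-by-one index $\mathit{depth}(u')-1$ selects the correct ``highest'' node rather than $u'$ itself or an unrelated node. This hinges on the induced subtree preserving ancestor relationships (which it does, since it is obtained by contracting non-$c$-colored nodes while keeping the root), and on the stipulation that $u$ is an ancestor of $v$, which guarantees $u'$ is an ancestor of $v'$ in the induced subtree. Everything else is a routine accounting of the resource bounds already granted by the cited preliminaries.
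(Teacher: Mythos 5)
Your proposal follows the paper's construction exactly---a precomputed \fca{} table plus a level ancestor structure on each induced $c$-colored subtree, the same case analysis for \lca{}, and the same $O(t\sigma)$ accounting via $\sum_c t_c = O(D) = O(t\sigma)$---and it is correct in substance. One remark: the target node you correctly identify (the child of $u'$ on the $u'$-to-$v'$ path in the induced subtree) lies at depth $depth(u')+1$, so under the paper's definition of \la{} the query should be $\la(v',depth(u')+1)$ rather than $\la(v',depth(u')-1)$; this off-by-one appears in the paper's own text as well, and your verbal description of which node is wanted is the accurate one.
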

%\begin{proof}
% The results of $\fca$ queries can be found and stored using $O(t\sigma)$ time and space. The induced colored subtrees can be computed in $O(t\sigma)$ time and use $O(D)=O(t\sigma)$ space. A $\fca$ query clearly takes $O(1)$ time. For a $\lca$ query, we perform two $\fca$ queries and one $\la$ query, each of which takes constant time.\qed
% \end{proof}

\subsection{A $\langle O(t+\frac{t\sigma}{w}),O(t+\frac{t\sigma}{w}),O(\log t) \rangle$ Solution}

%This data structure decomposes $\tree$ into heavy paths and we build a binary tree of summaries for each path.

%\paragraph{Data structure.} 
We fix a heavy path decomposition of $\tree$. For each path $p$ in the heavy path decomposition of $\tree$ we build a balanced binary tree $\tree_p$ having the nodes of $p$ as leaves. For each node $v$ in $\tree_p$ we store a summary $B(v)$ of the colors of its children. For each heavy path $p=v_1,v_2,\ldots,v_k$, where $v_1$ is the highest node on the path, we store a summary $P(v_i)$ of colors on the path prefix $v_1\ldots v_i$ for every $v_i$ on $p$. 

%\paragraph{Query.}
For answering a $\fca(v,c)$ query, let $p=v_1,v_2,\ldots,v_k$ be the heavy path containing $v$ and let $v_i=v$ for some $1\leq i \leq k$. If $P(v_i)[c]=\mathtt{1}$ we find the lowest ancestor $v_a$ of $v_i$ in $\tree_p$ for which $B(\leftc(v_a))[c]=\mathtt{1}$ and $v_i\notin \tree_p(\leftc(v_a))$. The answer to the query is then the rightmost leaf in $\tree_p(\leftc(v_a))$ with color $c$. If $P(v_i)[c]=\mathtt{0}$ we repeat the procedure with $v_i=parent(v_1)$, i.e., we jump to the previous heavy path, until we find the first colored ancestor or we reach the root of $\tree$.

A $\lca(u,v,c)$ query is handled in a similar way. We first find the highest light node $w$ on the path from $u$ to $v$ for which $P(parent(w))[c]=\mathtt{1}$. Let $p$ be the heavy path containing $parent(w)$. Now there are three cases. If $u$ is not on $p$, the answer to the query is the leftmost leaf in $T_p$ that has color $c$. If $p$ contains $u$, the answer is the leftmost leaf with color $c$ to the right of $u$ in $\tree_p$, if such a node exists. If it does not exist, we repeat the first step for the second highest light node $w'$ between $u$ and $v$ for which $P(parent(w'))[c]=\mathtt{1}$.

%\paragraph{Analysis.} 
The heavy path decomposition of $\tree$ can be found and stored in $O(t)$ time and space. Since the paths of the heavy path decomposition are disjoint, the total number of leaves in the binary summary trees is $t$, so the total number of nodes in the trees is $O(t)$. We store $O(t)$ summary bit vectors of size $O(\frac{\sigma}{w})$ using a total of $O(\frac{t\sigma}{w})$ space. We use $O(\frac{t\sigma}{w})$ bitwise OR operations to create the summaries in a bottom up fashion. In total, preprocessing time and space usage is $O(t+\frac{t\sigma}{w})$.

For both queries we visit at most $\log t$ heavy paths. When the path with the answer has been found we walk up the binary tree and then down again. Since the tree is balanced and has at most $t$ leaves, this takes $O(\log t)$ time. For $\lca$ queries we do this at most twice. The query time for $\fca$ and $\lca$ queries is therefore $O(\log t)$ time.

\begin{lemma}
The packed tree color problem can be solved using $O(t+\frac{t\sigma}{w})$ preprocessing time and space, and $O(\log t)$ query time.
\end{lemma}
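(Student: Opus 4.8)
The plan is to verify two things: that the query procedures return the correct node, and that the stated resource bounds hold. Since the data structure itself has already been described, the real work lies in the correctness arguments; the complexity analysis is then a bookkeeping of the facts established earlier in the preliminaries.

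First I would pin down the structural invariant behind the $\fca$ query. Within a single heavy path $p = v_1, \ldots, v_k$, order the leaves of $\tree_p$ so that leaf order matches path order ($v_1$ leftmost, $v_k$ rightmost); then the ancestors of $v = v_i$ that lie on $p$ are exactly the leaves strictly to the left of $v_i$ in $\tree_p$, and the \emph{lowest} such ancestor carrying color $c$ is the \emph{rightmost} color-$c$ leaf to the left of $v_i$. I would first show that $P(v_i)[c] = \mathtt{1}$ holds iff some node of the prefix $v_1 \ldots v_i$ has color $c$, which is immediate from the definition of the prefix summary as an $\bor$; this single bit probe decides whether the answer lies at or above $v_i$ on $p$. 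Assuming it does, I would argue that the climb in $\tree_p$ implements a predecessor search: the lowest ancestor $v_a$ with $B(\leftc(v_a))[c] = \mathtt{1}$ and $v_i \notin \tree_p(\leftc(v_a))$ is precisely the node whose left subtree is the nearest block of leaves lying to the left of the root-to-$v_i$ path that contains color $c$, so the rightmost color-$c$ leaf in $\tree_p(\leftc(v_a))$ is the sought ancestor. That rightmost leaf is found by descending from $\leftc(v_a)$, at each step preferring $\rightc$ when its summary has the bit set, which terminates at the correct leaf. If $P(v_i)[c] = \mathtt{0}$, no ancestor on $p$ has color $c$, and jumping to $parent(v_1)$ is correct because it is the nearest strict ancestor of $v$ off the path $p$; the heavy path decomposition guarantees only $O(\log t)$ such jumps before the root is reached.

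For the $\lca$ query I would carry out the analogous argument, now locating the \emph{highest} color-$c$ node on the ancestor-to-descendant path from $u$ to $v$. I would first justify that the highest light node $w$ with $P(parent(w))[c] = \mathtt{1}$ identifies the highest heavy path segment on the $u$–$v$ path that contains the color, since $P(parent(w))$ summarizes exactly the prefix ending at $parent(w)$ on its heavy path. Within that heavy path $p$ the highest color-$c$ node in range is the leftmost color-$c$ leaf of $\tree_p$ not strictly above the entry point, which splits into the three stated cases according to whether $u$ lies on $p$: if $u \notin p$ the answer is the leftmost color-$c$ leaf of $\tree_p$, and if $u \in p$ it is the leftmost color-$c$ leaf at or to the right of $u$. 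I would then verify that the fallback to the second-highest qualifying light node $w'$ is invoked exactly when that candidate segment contains no color-$c$ node within the $u$–$v$ range, and that iterating in this way exhausts all possibilities, so correctness follows.

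The resource analysis is routine given the earlier facts. The heavy path decomposition costs $O(t)$ time and space and produces disjoint paths with $t$ leaves in total, so the trees $\tree_p$ have $O(t)$ nodes altogether; each node stores one summary of length $\sigma$ packed into $O(\sigma/w)$ words, and building all $B$-summaries bottom-up together with all prefix summaries $P$ along the paths uses $O(t\sigma/w)$ $\bor$ operations, yielding $O(t + t\sigma/w)$ preprocessing time and space. For the query time the critical observation is that the per-path test $P(\cdot)[c]$ is a single $O(1)$ bit probe, so traversing the $O(\log t)$ heavy paths meeting a root-to-node path costs $O(\log t)$ in total, while the up-and-down walk in a balanced tree with at most $t$ leaves is performed only on the one path carrying the answer, costing $O(\log t)$; a $\lca$ query does this at most twice, so both queries run in $O(\log t)$ time. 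The main obstacle I anticipate is the $\fca$ predecessor-search correctness, namely proving that the condition selecting $v_a$ picks the nearest preceding colored leaf and not an earlier one, together with the $\lca$ case analysis and its fallback to $w'$, where one must argue that no higher in-range color-$c$ node is ever skipped.
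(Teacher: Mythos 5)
Your proposal matches the paper's construction and argument essentially exactly: the same heavy path decomposition with balanced summary trees $\tree_p$ over each path, the prefix summaries $P(\cdot)$ as the $O(1)$ per-path test, the predecessor-style climb/descent in $\tree_p$ for \fca, the same three-case analysis for \lca, and the same resource accounting. The correctness details you flesh out (leaf order matching path order, the role of the condition $v_i \notin \tree_p(\leftc(v_a))$) are left implicit in the paper but are consistent with it, so this is a correct proof along the paper's own route.
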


\subsection{A $\langle O(t+\frac{t\sigma\log w}{w}),O(t+\frac{t\sigma}{w}),O(\frac t w) \rangle$ Solution}

% Intro

%\paragraph{Data structure.} 
Let $v_1,\ldots,v_t$ be the nodes of $\tree$ in pre-order. We will represent $\tree$ as a $\sigma \times t$ bit matrix $M$. Let $c$ be a color from the set of colors $\{1,\ldots,\sigma\}$. In row $c$ of $M$ we store a bit string where bit $i$ is \texttt{1} iff $v_i$ has color $c$. For each node $v_i$ we also store a bit string $A(i)$ where bit $j$ is \texttt{1} iff $v_j$ is an ancestor of $v_i$.

We construct this data structure from a packed colored tree as follows. Assume that the bit strings representing the node colorings form a $t \times \sigma$ matrix where row $i$ is the colorings of node $v_i$. We transpose this matrix to get $M$. To do this we partition the matrix into a $\frac t w \times \frac \sigma w$ matrix (assume w.l.o.g. that $w$ divides $t$ and $\sigma$), transpose each $w\times w$ submatrix as described in \cite{thorup2002randomized}, and transpose the $\frac t w \times \frac \sigma w$ matrix to get $M$. To compute the ancestor bit strings first set $A(root(\tree))=[\mathtt{0}]^t$. For all other nodes $v_i$, where $v_j$ is the parent of $v_i$, set $A(v_i)=A(v_j)\bor 2^j$.

%\paragraph{Query.} 
We answer a $\fca(v,c)$ as follows. Let $R=M[c]\band A(v)$. Now $R$ is a bit string representing the set of ancestors of $v$ with color $c$. Since the nodes have pre-order indices, the answer to the query is $v_i$, where $i$ is the index of the least significant set bit in $R$. %To find $i$ we scan $R$ from right to left to find the least significant non-zero word. In this word we determine the least significant set bit as shown in Lemma ...

To answer a $\lca(v,u,c)$ query we start by computing $R$ the same way as above. We then set the first $i-1$ bits of $R$ to \texttt{0}, where $i$ is the index of $u$. The answer to the query is the most significant set bit of $R$.% which is found similarly as in the above.

%\paragraph{Analysis.} 
The $\sigma \times t$ bit matrix $M$ can be packed in words and therefore uses $O(\frac{t\sigma}{w})$ space. The same is evident for the ancestor bit strings. Transposing a $w\times w$ matrix takes $O(w\log w)$ time, and since there are $\frac{t\sigma}{w^2}$ submatrices of this size in the color bit matrix, the total time spent for all submatrices is $O(\frac{t\sigma\log w}{w})$. Transposing the $\frac t w \times \frac \sigma w$ matrix takes $O(\frac{t\sigma}{w})$ time. Computing the ancestor bit strings clearly takes $O(\frac{t\sigma}{w})$ time.

The size of $R$ is $O(\frac t w)$, so finding the first non-zero word takes $O(\frac t w)$ time. Determining the least or most significant set bit of a word is done in $O(1)$ time. Thus, the query time for both a $\fca$ and a $\lca$ query is $O(\frac t w)$.

\begin{lemma}
The packed tree color problem can be solved using $O(t+\frac{t\sigma\log w}{w})$ preprocessing time,  $O(t+\frac{t\sigma}{w})$ space, and $O(\frac t w)$ query time.
\end{lemma}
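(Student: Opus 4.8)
The plan is to verify the correctness and the resource bounds of the matrix representation described above, and the crux throughout is the correspondence between pre-order index, bit significance, and ancestor depth. First I would fix the conventions that $v_1,\ldots,v_t$ is the pre-order listing and that in every length-$t$ bit string bit $1$ is the most significant. The structural fact to establish up front is that if $u$ is a proper ancestor of $w$ then $u$ precedes $w$ in pre-order; consequently the ancestors of a fixed node $v$ occupy strictly more significant positions the closer they are to the root, so among the ancestors of $v$ the \emph{lowest} one occupies the least significant set position. This single monotonicity statement is what reduces both queries to a most/least-significant-bit computation on a word.

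Next I would argue the construction is correct and within the stated time bound. The matrix $M$ is by definition the transpose of the $t\times\sigma$ coloring matrix, so correctness is immediate; for the timing I would invoke the cited $O(w\log w)$ transpose of a single $w\times w$ block, note there are $\frac{t\sigma}{w^2}$ such blocks for a total of $O(\frac{t\sigma\log w}{w})$, and add the $O(\frac{t\sigma}{w})$ cost of permuting the $\frac tw\times\frac\sigma w$ grid of blocks. For the ancestor strings I would prove the recurrence $A(v_i)=A(v_j)\bor 2^{j}$, with $v_j=\mathrm{parent}(v_i)$, by induction on pre-order: since $v_j$ precedes $v_i$ its string is already built, and the ancestor set of $v_i$ is exactly $v_j$ together with the ancestors of $v_j$. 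Processing the nodes in pre-order thus costs $O(\frac tw)$ word operations each.

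I would then check the two queries against the monotonicity fact. For $\fca(v,c)$, the word $R=M[c]\band A(v)$ has set bits exactly at the ancestors of $v$ colored $c$, so by the observation the lowest such ancestor is the least significant set bit, readable in $O(1)$ time per word. For $\lca(u,v,c)$ with $u$ an ancestor of $v$, the path from $u$ to $v$ is precisely the set of ancestors of $v$ whose pre-order index is at least that of $u$; hence after clearing bits $1,\ldots,i-1$ of $R$ (where $i$ is the index of $u$) the surviving set bits are exactly the colored path nodes, and the \emph{highest} of these is the most significant surviving bit. In both cases the answer is read off $R$, and since $R$ spans $O(\frac tw)$ words, each query is dominated by scanning $R$ for its first (respectively last) nonzero word, giving $O(\frac tw)$ time.

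Finally I would collect the bounds. The matrix $M$ packs into $O(\frac{t\sigma}{w})$ words, and the construction fits $O(t+\frac{t\sigma\log w}{w})$ time. The ancestor strings deserve a second look: there are $t$ of them, each $t$ bits, so they occupy $O(\frac{t^2}{w})$ words, and the claimed $O(t+\frac{t\sigma}{w})$ space is attained precisely in the regime $t=O(w+\sigma)$ in which this solution will be applied (to the small subtrees of the combined structure); I would either state this regime assumption explicitly or phrase the bound in terms of the relevant cluster size. The main obstacle I anticipate is not any single estimate but the significance bookkeeping: the reduction of \fca\ and \lca\ to single-word least/most-significant-bit primitives is valid only under the fixed pre-order, most-significant-first convention, so I would pin that convention down first and double-check the masking direction in the \lca\ case, where clearing the wrong end of $R$ would return a descendant rather than the highest colored node on the path.
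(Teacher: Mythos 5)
Your proof follows the paper's argument exactly: the same transposed color matrix with the cited $O(w\log w)$ block transposition, the same ancestor bit strings built by the parent recurrence in pre-order, and the same reduction of \fca{} and \lca{} to least/most-significant-set-bit computations on $R=M[c]\band A(v)$. Your remark that the ancestor strings occupy $O(\frac{t^2}{w})$ words, so that the stated $O(t+\frac{t\sigma}{w})$ space bound really requires $t=O(w+\sigma)$ (the regime of the $O(w)$-size clusters where this solution is actually deployed), is a legitimate catch: the paper dismisses this with ``the same is evident for the ancestor bit strings,'' which is not evident as stated.
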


\subsection{Combining the Solutions}
We now show how to combine the previously described solutions to get $\langle O(t+\frac{n\sigma}{w}), O(t+\frac{n\sigma}{w}), O(\log w) \rangle$ and $\langle O(t+\frac{t\sigma\log w}{w}), O(t+\frac{t\sigma}{w}), O(1) \rangle$ trade-offs. This is achieved by doing a cluster partioning of the tree.

%\paragraph{Data structure.} 
First we convert $\tree$ to a binary tree $\tree'$. Then we partition $\tree'$ into $O(\frac t w)$ clusters, i.e., each cluster has size $O(w)$. For each cluster $C$, where one boundary node is a leaf in the cluster and the other is the root of the cluster, we make a summary of the colors of the nodes on the path from the root to the leaf. The summary is stored in the macro tree node that corresponds to the leaf boundary node of $C$. Apply the $\langle O(t\sigma), O(t\sigma), O(1) \rangle$ solution to the macro tree, and apply either the $\langle O(\frac{t\sigma}{w}),O(\frac{t\sigma}{w}),O(\log t) \rangle$ solution or the $\langle O(\frac{t\sigma\log w}{w}),O(\frac{t\sigma}{w}),O(\frac t w) \rangle$ solution to each cluster using the original colors.

%\paragraph{Query.} 
Here is how we answer a $\fca(v,c)$ query. Let $C_v$ be the cluster containing $v$. First we ask for $\fca(v,c)$ in $C_v$. If the answer is a node in $C_v$, we are done. If it is undefined, we find the node $r$ in the macro tree corresponding to the root of $C_v$. We check if $r$ has color $c$ in the macro tree and otherwise ask for $w=\fca(r,c)$ in the macro tree. In the cluster $C_w$ having $w$ as a leaf boundary node we then check if $w$ has color $c$ and otherwise ask for $\fca(w,c)$ in $C_w$.

We answer a $\lca(u,v,c)$ query as follows. Assume that $u\neq v$ and let $C_u$ and $C_v$ be the clusters containing $u$ and $v$. If $C_u=C_v$ then the answer is $\lca(u,v,c)$ in the cluster containing $u$ and $v$. If $C_u\neq C_v$, let $w$ be the leaf boundary node of $C_u$ where $v\in \tree(w)$. We now proceed in three steps. First, we ask for $\lca(u,w,c)$ in $C_u$. If the query returns a node, this is also the answer to the $\lca(u,v,c)$ query. If the answer in the first step is undefined we ask for $z=\lca(w,root(C_v),c)$ in the macro tree to locate the highest cluster with a node with color $c$ between $u$ and $v$. The answer to the query is then $\lca(root(C_z),z,c)$ on $C_z$. If the first two steps fail, the answer to a query is $\lca(root(C_v),v,c)$.

%\paragraph{Analysis.} 
The cluster partition can be computed in linear time, and the cluster path summaries are computed in $O(\frac{t\sigma}{w})$ time. Since the macro tree has $O(\frac t w)$ nodes the preprocessing time and space to apply the $\langle O(t\sigma), O(t\sigma), O(1) \rangle$ solution becomes $O(\frac{t\sigma}{w})$. To answer a query we perform a constant number of $\fca$ and $\lca$ queries on the macro tree and clusters. Therefore the total time to perform queries on the macro tree is $O(1)$ time. To get $(i)$ we apply the $\langle O(t+\frac{t\sigma}{w}), O(t+\frac{t\sigma}{w}), O(\log t) \rangle$ solution to clusters. Since a cluster has size $O(w)$ we use a total of $O(\log w)$ time performing queries on clusters. To get $(ii)$ we apply the $\langle O(\frac{t\sigma\log w}{w}),O(\frac{t\sigma}{w}),O(\frac t w) \rangle$  solution to clusters. Again, since clusters have size $O(w)$ we use a total of $O(1)$ time performing queries on clusters. Preprocessing time and space for the cluster data structures follow because $\sum_{C\in CS}|C|=O(t)$.

\begin{theorem}\label{thm:PTC}
The packed tree color problem can be solved using $O(t+\frac{t\sigma}{w})$ space,
\begin{enumerate}
	\item[\textit{(i)}] $O(t+\frac{t\sigma}{w})$ preprocessing time, and $O(\log w)$ query time, or
	\item[\textit{(ii)}] $O(t+\frac{t\sigma}{w}\log w)$ preprocessing time, and $O(1)$ query time.
\end{enumerate}
\end{theorem}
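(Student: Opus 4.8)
The plan is to combine the three solutions from the preceding lemmas through a two-level decomposition of the tree, exactly as outlined before the theorem. First I would binarize \tree{} into a tree $\tree'$ of size $O(t)$ that preserves ancestor relationships, so that the cluster-partition machinery from the preliminaries applies. Taking the partition parameter $x=w$, I split $\tree'$ into $O(t/w)$ clusters of size $O(w)$, so the macro tree of boundary nodes has $O(t/w)$ nodes. Each cluster has one root boundary node and one leaf boundary node; I equip every cluster with a local packed-tree-color structure built on the original colors (the $O(\log t)$-query solution for trade-off \textit{(i)}, the $O(t/w)$-query solution for \textit{(ii)}), and I summarize the colors along each cluster's root-to-leaf path into a single bit string stored at the macro-tree node representing that leaf boundary. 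Treating the macro tree as a packed colored tree whose node colors are these summaries, I then apply the $\langle O(t\sigma),O(t\sigma),O(1)\rangle$ solution to it.

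For correctness I would verify that the three-phase query algorithms are both complete and extremal. The key observation is that the ancestor path of any node $v$ is partitioned by the decomposition into the portion inside $C_v$, a sequence of whole cluster paths higher up, and the boundary nodes linking them, and that the summaries stored in the macro tree capture exactly the colors occurring on those higher cluster paths. For $\fca(v,c)$, the local query inside $C_v$ returns the answer whenever it lies in $C_v$; otherwise the macro-tree query on the summaries jumps to the lowest higher cluster whose path contains color $c$, and a final local query inside that cluster pinpoints the lowest such node, which is then the overall lowest colored ancestor. For $\lca(u,v,c)$ the same partition of the $u$-to-$v$ path applies, and I would check that the three cases ($C_u=C_v$; the highest colored node lies in $C_u$; or it lies in a higher cluster located by a macro-tree \lca{} on summaries, or finally in $C_v$) exhaust the path and return the highest colored node.

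For the complexity I would simply sum the contributions. Binarization and cluster partitioning take $O(t)$ time, the $O(t/w)$ summaries cost $O(\frac{t\sigma}{w})$ to build and store, and since the macro tree has $O(t/w)$ nodes the $\langle O(t\sigma),O(t\sigma),O(1)\rangle$ solution applied to it uses $O(\frac{t\sigma}{w})$ time and space with $O(1)$ queries. Because $\sum_{C}|C|=O(t)$, applying the $O(\log t)$-query solution to all clusters costs $O(t+\frac{t\sigma}{w})$ preprocessing and space, giving trade-off \textit{(i)}: a query performs a constant number of $O(1)$-time macro-tree queries together with $O(\log w)$-time cluster queries (the $O(\log t)$ bound on a size-$O(w)$ cluster), for $O(\log w)$ total. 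Applying the $O(t/w)$-query solution instead costs $O(w+\sigma\log w)$ per cluster, i.e.\ $O(t+\frac{t\sigma}{w}\log w)$ overall, and each cluster query then runs in $O(\frac{w}{w})=O(1)$ time, giving trade-off \textit{(ii)} with $O(1)$ queries; the space remains $O(t+\frac{t\sigma}{w})$ in both cases.

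The main obstacle I anticipate is the correctness bookkeeping at cluster boundaries. Since a boundary node belongs to two clusters and the summaries are indexed by leaf boundary nodes, I must ensure that each boundary node's own color is tested exactly once as the query hands off between the initial local query, the macro-tree query, and the final local query, so that no colored node is skipped and none is double-counted. Guaranteeing \emph{extremality} of the returned node across these transitions — lowest for \fca{}, highest for \lca{}, rather than merely its existence — is the delicate part; by contrast, the complexity accounting is routine once the decomposition is fixed.
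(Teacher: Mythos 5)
Your proposal follows essentially the same route as the paper: binarize, cluster-partition with parameter $w$, store root-to-leaf color summaries of clusters at the macro-tree nodes, apply the $\langle O(t\sigma),O(t\sigma),O(1)\rangle$ solution to the $O(t/w)$-node macro tree, and use the $O(\log t)$- or $O(t/w)$-query structures inside the size-$O(w)$ clusters, with matching query decompositions and complexity accounting. The boundary-node bookkeeping you flag as delicate is indeed the point the paper handles explicitly (checking the color of the boundary node itself before querying the macro tree), so your plan is correct and matches the paper's proof.
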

% \begin{proof} The cluster partition can be computed in linear time, and the cluster path summaries are computed in $O(\frac{t\sigma}{w})$ time. Since the macro tree has $O(\frac t w)$ nodes the preprocessing time and space to apply the $\langle O(t\sigma), O(t\sigma), O(1) \rangle$ solution becomes $O(\frac{t\sigma}{w})$. To answer a query we perform a constant number of $\fca$ and $\lca$ queries on the macro tree and clusters. Therefore the total time to perform queries on the macro tree is $O(1)$ time. To get $(i)$ we apply the $\langle O(t+\frac{t\sigma}{w}), O(t+\frac{t\sigma}{w}), O(\log t) \rangle$ solution to clusters. Since a cluster has size $O(w)$ we use a total of $O(\log w)$ time performing queries on clusters. To get $(ii)$ we apply the $\langle O(\frac{t\sigma\log w}{w}),O(\frac{t\sigma}{w}),O(\frac t w) \rangle$  solution to clusters. Again, since clusters have size $O(w)$ we use a total of $O(1)$ time performing queries on clusters. Preprocessing time and space for the cluster data structures follow because $\sum_{C\in CS}|C|=O(t)$.\qed
% \end{proof}

\section{Labelled Successor Data Structure for SLPs}
The answer to a labelled successor \labsuc$(i,c)$ query on a string $\str$ is the index of the first occurrence of the character $c$ after position $i$ in $\str$. More formally, the answer to \labsuc$(i,c)$ is an index $j$ such that $S[j]=c$, $j>i$, and $S[k]\neq c$ for $k=i+1,\ldots ,j-1$. 

In this section we present a data structure that supports $\labsuc(i,c)$ queries on an SLP.  This is the first data structure dedicated to solving this problem on SLPs. Alternatively, we may build the random access data structure of \cite{bille2011random} and then answer an $\labsuc(i,c)$ query by doing a random access query for position $i$ followed by a linear scan to find the first occurrence of $c$. This yields a query time of $O(\log N+j-i)$ while using $O(n)$ space for the data structure. 

Our data structure combines the random access data structure of \cite{bille2011random} with a new way of navigating the SLP based on the characters of substrings. For the latter we will utilize our result for the packed tree color problem described in the previous section.

The basic idea is to store a bit string for each node $v\in \slp$ that summarizes which characters that are generated by $\str(v)$. We first seach for position $i$ in $\str$ and let $p$ be the unique path in $\slp$ defining $S[i]$. We then walk up $p$ until reaching a node $u$ where $\rightc(u)$ generates a string that contains $c$ and $\rightc(u)$ is not on $p$. Then we walk down from $\rightc(u)$ using the summaries to locate the leftmost terminal descending from $\rightc(u)$ that generates $c$. This algorithm requires $O(n+\frac{n\sigma}{w})$ space and $O(h)$ time to find $\labsuc(i,c)$.

To speed things up we fix a heavy path decomposition of the SLP to get a heavy forest and build the random access data structure of \cite{bille2011random}. Now $p$ is a sequence of entry and exit points in the trees of the heavy forest. When we walk up $p$ we enter a tree in an exit node and have to walk away from the root to the first node whose right child generates a string that contains $c$ before reaching the entry node. This is equivalent to a $\lca$ query. When we walk down to find $\labsuc(i,c)$ we enter a tree and have to walk towards the root to find either the first ancestor whose left child generates a string that contains $c$ or the highest ancestor whose right child generates $c$. This is equivalent to a $\fca$ and a $\lca$ query, respectively.

In the remainder of this section we give the details of the data structure.

\begin{theorem}\label{thm:LSDS}
There is a data structure supporting labelled successor (and predecessor) queries on a string of size $N$ over an alphabet of size $\sigma$ compressed by an SLP of size $n$ in the word RAM model with word size $w\geq \log N$ using $O(n+\frac{n\sigma}{w})$ space and

\begin{enumerate}
	\item[\textit{(i)}] $O(n+\frac{n\sigma}{w})$ preprocessing time, and $O(\log N\log w)$ query time, or
	\item[\textit{(ii)}] $O(n+\frac{n\sigma}{w}\log w)$ preprocessing time, and $O(\log N)$ query time.
\end{enumerate}
\end{theorem}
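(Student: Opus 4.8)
The plan is to combine the random access structure of \cite{bille2011random} with the packed tree color data structure of Theorem~\ref{thm:PTC}, applied to the trees of the heavy forest $\hf$. The overall strategy has three ingredients: (1) augment each node $v \in \slp$ with a summary bit string recording which characters appear in $\str(v)$, (2) realize the "walk up $p$, then walk down" navigation described informally above as a sequence of $\fca$ and $\lca$ queries on the trees of $\hf$, and (3) bound the number of such queries by $O(\log N)$ using the heavy forest property.

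\medskip\noindent
First I would set up the color structure. For each tree $\tree \in \hf$ I color every node $v$ by the set of characters generated by $\rightc(v)$ (i.e.\ by $\str(\rightc(v))$), giving a packed colored tree with $\sigma$ colors; a symmetric structure colored by $\str(\leftc(v))$ handles the leftward navigation. Building the character summary $C(v)$ for every SLP node is done bottom-up with $O(n)$ bitwise-OR operations on bit strings of length $O(\sigma/w)$, costing $O(n + \tfrac{n\sigma}{w})$ time and space. Since the trees of $\hf$ are disjoint and together have $O(n)$ nodes, applying Theorem~\ref{thm:PTC} across all of them costs $O(n + \tfrac{n\sigma}{w})$ space and either $O(n + \tfrac{n\sigma}{w})$ preprocessing with $O(\log w)$ query time (giving part~\textit{(i)}) or $O(n + \tfrac{n\sigma}{w}\log w)$ preprocessing with $O(1)$ query time (giving part~\textit{(ii)}).

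\medskip\noindent
Next I would describe the query. A query $\labsuc(i,c)$ begins with a random access query for position $i$, which by \cite{bille2011random} takes $O(\log N)$ time and returns the root-to-terminal path $p$ together with the entry and exit node of $p$ in each of the $O(\log N)$ trees of $\hf$ it visits. Walking up $p$, in each visited tree I must find the first node (moving from the exit node toward the root, but not past the entry node) whose right child generates $c$ and whose right child is off $p$; this is exactly a $\lca$ query between the entry and exit nodes with color $c$. The first such node $u$ found (processing trees in bottom-up order of $p$) identifies where the answer branches off. I then descend from $\rightc(u)$ to the leftmost terminal generating $c$: in each tree entered on the way down I either locate the first ancestor whose left child generates $c$ (an $\fca$ query) or, failing that, the highest ancestor whose right child generates $c$ (an $\lca$ query), always steering toward the leftmost occurrence. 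Each descent step crosses one light edge, so by the heavy forest property it touches $O(\log N)$ trees.

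\medskip\noindent
Finally I would total the costs. Both the upward walk and the downward walk visit $O(\log N)$ trees, and in each tree we perform $O(1)$ packed tree color queries, so the total query time is $O(\log N)$ times the per-query cost of Theorem~\ref{thm:PTC}: $O(\log N \log w)$ for part~\textit{(i)} and $O(\log N)$ for part~\textit{(ii)}, plus the $O(\log N)$ random access cost which is absorbed. The predecessor variant follows symmetrically by reversing the roles of left and right children. \textbf{The main obstacle} I anticipate is the bookkeeping in the descent: one must argue carefully that greedily choosing at each tree the leftmost eligible branch (preferring an $\fca$ hit on the left child, and only otherwise taking the $\lca$ on a right child) really yields the globally leftmost terminal generating $c$, and that each such choice advances across exactly one light edge so the $O(\log N)$ tree bound holds. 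Correctly handling the boundary cases — where the relevant child lies on $p$ itself, or where the entry/exit node coincides with a boundary — is where the proof needs the most care; the complexity bounds themselves are then a routine product of the $O(\log N)$ tree count and the Theorem~\ref{thm:PTC} query time.
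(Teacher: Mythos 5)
Your proposal matches the paper's proof essentially step for step: the same $L$/$R$ character summaries per SLP node, the same two packed-tree-color structures (left and right colors) on the disjoint trees of the heavy forest, the same upward walk via $\lca$ between entry and exit nodes followed by a downward walk alternating $\fca$ (left colors) and $\lca$ (right colors), and the same accounting of $O(\log N)$ trees times the Theorem~\ref{thm:PTC} query cost. The descent bookkeeping you flag as the main obstacle is handled in the paper exactly as you sketch it (three cases per tree: $\fca$ on the left, root generates $c$, or $\lca$ on the right), so no gap remains.
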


	%\subsection{Contents of the Data Structure} 
%\paragraph{Contents of the data structure.}
\begin{proof} We first apply the construction of \cite{bille2011random}, and let $\hf$ be the heavy forest obtained from the heavy path decomposition of $\slp$. For each node $v$ in $\slp$ with children $\leftc(v)$ and $\rightc(v)$ we store two bit strings $L(v)$ and $R(v)$ summarizing the characters in $S(\leftc(v))$ and $S(\rightc(v))$. If $v$ and $\leftc(v)$ are in the same tree in $\hf$ then $L(v)=[\mathtt{0}]^\sigma$ and similarly for $\rightc(v)$ and $R(v)$. For each tree in $\hf$ we build two data structures for the packed tree color problem.  One where the $L$ bit strings serve as colors and one where the $R$ bit strings serve as colors.

	%\subsection{Answering a Query}
	%\paragraph{Answering a query.}
We answer an $\labsuc(i,c)$ query as follows. First we access the character $S[i]$ using the random access data structure. We now have the entry and exit points of the heavy trees in $\hf$ on the unique path $p$ describing $S[i]$. Let $T_1,\ldots,T_k\in \hf$ be a sequence of trees on $p$ in the order they are visited when starting from the root and ending in the terminal generating $\str[i]$, and let $(v_1,u_1),\ldots,(v_k,u_k)$ be the entry and exit nodes for each tree in the sequence. Using the packed tree color data structure for the $R$ colors, we repeat $\lca(u_i,v_i,c)$ for $i=k$ down to some $j$ until $\lca(u_j,v_j,c)$ is not undefined. Let $w=\rightc(\lca(u_j,v_j,c))$. We now search for the first occurrence of $c$ in $S(w)$. Let $T_i$ be the tree in $\hf$ that contains the node $w$, then the search proceeds in three steps. First, we ask for $v=\fca(w,c)$ in $T_i$ in the data structure for $L$ colors and restart the search from $\leftc(v)$. If the query $\fca(w,c)$ is undefined we continue to the next step. In the second step we check if $root(T_i)$ generates $c$. If it does, we now have a unique set of entry and exit nodes in the trees of $\hf$ that constitutes a path to a terminal that generates the first $c$ after position $i$. The answer to the $\labsuc(i,c)$ query is the index of this $c$ which we retrieve using the random access data structure. Finally, if $root(T_i)$ does not generate $c$ we ask for $v=\lca(w, root(T_i),c)$ in $T_i$ in the data structure for $R$ colors, and restart the search from $\rightc(v)$.

%\paragraph{Analysis.}
The data structure uses $O(n+\frac{n\sigma}{w})$ space because the random access data structure uses linear space and the bit strings $L$ and $R$ use $O(\frac{n\sigma}{w})$ space. The random access data structure, including the heavy path decomposition, takes $O(n)$ time to compute and the $L$ and $R$ values are computed using $O(\frac{n\sigma}{w})$ OR operations in a bottom up fashion. Therefore, this part of the data structure is computed in $O(n+\frac{n\sigma}{w})$ time.

To get Theorem \ref{thm:LSDS} (i) we use the packed tree color data structure of Theorem \ref{thm:PTC} (i) for the trees in $\hf$ and likewise for (ii). Since the trees are disjoint, the preprocessing time and space becomes as in the Theorem \ref{thm:LSDS}.

For the query, we first do one random access query that takes $O(\log N)$ time, then we perform at most $\log N$ $\lca$ queries walking up the SLP and at most $2\log N$ $\fca$ and $\lca$ queries locating the labelled successor. Finally, retrieving the index also takes $O(\log N)$ time using the random access data structure.   \qed
\end{proof}

\section{Subsequence Matching}
We will now use the labelled successor data structure to obtain a subsequence matching algorithm for SLPs. Our algorithm is based on the folklore algorithm for subsequence matching  which works as follows (see also \cite{mannila1997discovery,das1997episode}). First we find the minimal prefix $\str[1..j]$ that contains $\pat$ as a subsequence. This is done by reading $\str$ left to right while searching for the characters of $\pat$ one at a time. We then find the minimal suffix $\str[i..j]$ of the prefix $\str[1..j]$ that contains $\pat$. Similarly, this is done by scanning the prefix right to left. Now $\str[i..j]$ is the first minimal occurrence of $\pat$. To find the next minimal occurrence we repeat this process for the suffix $\str[i+1..N]$. It can be shown that this algorithm finds all minimal occurrences of $\pat$ in $O(Nm)$ time.

By using our labelled successor data structure described in the previous section we speed up the procedure of finding some specific character of $\pat$. Assume we have matched $\pat[1..k]$ to $\str[1..j]$ such that $\pat[k]=\str[j]$. Instead of doing a linear scan of $\str[j+1..N]$ to find $\pat[k+1]$ we ask for the next occurrence of $\pat[k+1]$ using $\labsuc(j,\pat[k+1])$.

For each occurrence of $\pat$ we perform $O(m)$ labelled successor (and labelled predecessor) queries, and we also have to construct the data structures to support these. By applying the results of Theorem \ref{thm:LSDS} we get Theorem \ref{thm:SCSM}.

%
% ---- Bibliography ----
%
\bibliographystyle{abbrv}
\bibliography{references}

\end{document}